\renewcommand{\today}{\ifcase \month \or January\or February\or March\or %
	April\or May\or June\or July\or August\or September\or October\or November\or %
	December\fi, \number \year} 
\newcommand{\bff}{\mbox{\bf f}}
\newcommand{\bx}{\mbox{\bf x}}
\newcommand{\bg}{\mbox{\bf g}}
\newcommand{\bA}{\mbox{\bf A}}
\newcommand{\ba}{\mbox{\bf a}}
\newcommand{\bu}{\mbox{\bf u}}
\newcommand{\bv}{\mbox{\bf v}}
\newcommand{\bB}{\mbox{\bf B}}
\newcommand{\bG}{\mbox{\bf G}}
\newcommand{\bH}{\mbox{\bf H}}
\newcommand{\bP}{\mbox{\bf P}}
\newcommand{\bM}{\mbox{\bf M}}
\newcommand{\bW}{\mbox{\bf W}}
\newcommand{\bI}{\mbox{\bf I}}
\newcommand{\bJ}{\mbox{\bf J}}
\newcommand{\bV}{\mbox{\bf V}}
\newcommand{\bQ}{\mbox{\bf Q}}
\newcommand{\bR}{\mbox{\bf R}}
\newcommand{\bX}{\mbox{\bf X}}
\newcommand{\bY}{\mbox{\bf Y}}
\newcommand{\bmu}{\mbox{\boldmath $\mu$}}
\newcommand{\bgamma}{\mbox{\boldmath $\gamma$}}
\newcommand{\bGamma}{\mbox{\boldmath $\Gamma$}}
\newcommand{\bXi}{\mbox{\boldmath $\Xi$}}
\newcommand{\bLambda}{\mbox{\boldmath $\Lambda$}}
\newcommand{\bTheta}{\mbox{\boldmath $\Theta$}}
\newcommand{\bPi}{\mbox{\boldmath $\Pi$}}
\newcommand{\bSigma}{\mbox{\boldmath $\Sigma$}}
\newcommand{\bOmega}{\mbox{\boldmath $\Omega$}}
\newcommand{\bPhi}{\mbox{\boldmath $\Phi$}}
\newcommand{\bPsi}{\mbox{\boldmath $\Psi$}}
\newcommand{\tr}{\mathrm{tr}}
\newcommand{\diag}{\mathrm{diag}}
\newcommand{\bw}{\mbox{\bf w}}
\newcommand{\beq}{\begin{eqnarray*}}
\newcommand{\eeq}{\end{eqnarray*}}
\newcolumntype{R}[2]{%
	>{\adjustbox{angle=#1,lap=\width-(#2)}\bgroup}%
	l%
	<{\egroup}%
}
\newtheorem{thm}{Theorem}[section]
\newtheorem{lem}{Lemma}[section]
\newtheorem{assum}{Assumption}[section]
\newtheorem{pro}{Proposition}[section]
\numberwithin{equation}{section}
\theoremstyle{definition}
\newtheorem{remark}{Remark}[section]
\def\@biblabel#1{\hspace*{-\labelsep}}
\DeclareMathOperator*{\argmax}{argmax}
\numberwithin{equation}{section}
\renewcommand{\hat}{\widehat}
\renewcommand{\hat}{\widehat}
\newcommand{\bfm}[1]{\ensuremath{\mathbf{#1}}}
\def\ba{\bfm a}   \def\bA{\bfm A}  
   \def\bB{\bfm B}  
\def\bd{\bfm d}     
\def\be{\bfm e}     
\def\bff{\bfm f}    
\def\bg{\bfm g}   \def\bG{\bfm G}  
\def\bh{\bfm h}   \def\bH{\bfm H}  
   \def\bI{\bfm I}  
   \def\bJ{\bfm J}
\def\bm{\bfm m}   \def\bM{\bfm M}
   \def\bP{\bfm P}  
\def\bq{\bfm q}   \def\bQ{\bfm Q}  
   \def\bR{\bfm R}  \def\RR{\mathbb{R}}
\def\bu{\bfm u}     
\def\bv{\bfm v}   \def\bV{\bfm V}  
\def\bw{\bfm w}   \def\bW{\bfm W}  
\def\bx{\bfm x}   \def\bX{\bfm X}  
   \def\bY{\bfm Y}
 \def\cA{{\cal  A}}
 \def\cE{{\cal  E}}
 \def\cF{{\cal  F}}
 \def\cI{{\cal  I}}
 \def\cM{{\cal  M}}
 \def\cS{{\cal  S}}
\def\calX{{\cal  X}} 
 \def\cY{{\cal  Y}}
\newcommand{\bfsym}[1]{\ensuremath{\boldsymbol{#1}}}
 \def\bgamma{\bfsym \gamma}             \def\bGamma{\bfsym \Gamma}
            \def\bDelta {\bfsym {\Delta}}
 \def\bmu{\bfsym {\mu}}                 
            \def\bPsi  {\bfsym {\Theta}}
 \def\bsigma{\bfsym \sigma}             \def\bSigma{\bfsym \Sigma}
         \def\bLambda {\bfsym {\Lambda}}
           \def\bOmega {\bfsym {\Omega}}
 			\def  \bXi{\bfsym{\Xi}}
 \def\bPsi{\bfsym {\Psi}}
\def\bvartheta{\bfsym{\vartheta}}	
\def\bTheta{\bfsym{\Theta}}	
 \def \bPhi {\bfsym \Phi}
 \def\bpsi{\bfsym \psi}             
\def\1{\bfsym{1}}	
\def\newpage{\vfill\eject}
\def\today{\ifcase\month\or
  January\or February\or March\or April\or May\or June\or
  July\or August\or September\or October\or November\or December\fi
  \space\number\day, \number\year}
\newdimen\biblioindent    \biblioindent=30pt
\newcommand{\beqn}{\begin{eqnarray}}
  \newcommand{\eeqn}{\end{eqnarray}}
\newcommand{\beqnn}{\begin{eqnarray*}}
  \newcommand{\eeqnn}{\end{eqnarray*}}
\def\tilde{\widetilde}
\def\[{\left [}  \def\]{\right ]} \def\({\left (}  \def\){\right )}
\def\hat{\widehat}
\theoremstyle{definition}
\def \diag {\mathrm{diag}} 
\def \det {\mathrm{det}}
\begin{document}
	\bibliographystyle{ecta}
	
	\title{Large Volatility Matrix Prediction using Tensor Factor Structure}

	\date{\today}

	\author{
		Sung Hoon Choi\thanks{%
			Department of Economics, University of Connecticut, Storrs, CT 06269, USA. 
			E-mail: \texttt{sung\_hoon.choi@uconn.edu}.} \\ 
		\and Donggyu Kim\thanks{Department of Economics, University of California, Riverside, CA 92521, USA.
			Email: \texttt{donggyu.kim@ucr.edu}.}\\ 
		}
	\maketitle
	\pagenumbering{arabic}	
\begin{abstract}
		\onehalfspacing
        Several approaches for predicting large volatility matrices have been developed based on high-dimensional factor-based Itô processes. 
        These methods often impose restrictions to reduce the model complexity, such as constant eigenvectors or factor loadings over time.
        However, several studies indicate that eigenvector processes are also time-varying. 
        To address this feature, this paper generalizes the factor structure by representing the integrated volatility matrix process as a cubic (order-3 tensor) form, which is decomposed into low-rank tensor and idiosyncratic tensor components.
        To predict conditional expected large volatility matrices, we propose the Projected Tensor Principal Orthogonal componEnt Thresholding (PT-POET) procedure and establish its asymptotic properties. 
        The advantages of PT-POET are validated through a simulation study and demonstrated in an application to minimum variance portfolio allocation using high-frequency trading data.\\
        
        		
\noindent \textbf{Keywords: }Diffusion process,   low-rank, POET, projected PCA, semi-parametric tensor factor model.
					
\end{abstract}

\newpage 

\doublespacing

\section{Introduction}
The study of volatility using high-frequency financial data is a pivotal area of research in financial econometrics and statistics. 
Understanding the dynamics of asset return volatility is critical for practical applications, including hedging, option pricing, risk management, and portfolio optimization. 
The growing availability of high-frequency financial data has spurred the development of numerous effective non-parametric methods for estimating integrated volatility. 
Notable examples include two-time scale realized volatility (TSRV) \citep{zhang2005tale}, multi-scale realized volatility (MSRV) \citep{zhang2006efficient, zhang2011estimating}, pre-averaging realized volatility (PRV) \citep{christensen2010pre, jacod2009microstructure}, wavelet realized volatility (WRV) \citep{fan2007multi}, kernel realized volatility (KRV) \citep{barndorff2008designing, barndorff2011multivariate}, quasi-maximum likelihood estimator (QMLE) \citep{ait2010high, xiu2010quasi}, local method of moments \citep{bibinger2014estimating}, and robust pre-averaging realized volatility \citep{fan2018robust, shin2023adaptive}.

The use of high-frequency data has greatly enhanced our understanding of market dynamics at lower (e.g., daily) frequencies.
To capture these dynamics, various conditional volatility models based on realized volatility have been developed.
Examples include realized volatility-based modeling approaches \citep{andersen2003modeling}, heterogeneous autoregressive (HAR) models \citep{corsi2009simple}, high-frequency-based volatility (HEAVY) models \citep{shephard2010realising}, realized GARCH models \citep{hansen2012realized}, and unified GARCH-Itô models \citep{kim2016unified, song2021volatility}.
Their empirical studies typically focus on volatility dynamics, given the high-frequency information for a finite number of assets.
However, in practice, we often need to manage large portfolios, which causes the overparameterization issue due to the excessive number of parameters compared to the sample size.
To address this issue, approximate factor model structures are commonly imposed on large volatility matrices \citep{fan2013large}.
In particular, high-dimensional factor-based Itô processes are frequently used under sparsity assumptions on the idiosyncratic volatility \citep{ait2017using, fan2016incorporating, fan2018robust, kim2018large}. 
Recently, \cite{kim2019factor} proposed the factor GARCH-Itô model, and \cite{shin2021factor} developed the factor and idiosyncratic VAR-Itô model, both based on high-dimensional factor-based Itô processes. 
These models assume that the eigenvalue sequence of latent factor volatility matrices follows either a unified GARCH-Itô structure \citep{kim2016unified} or a VAR structure, which allows the dynamics of volatility to be explained by factors. 
They restrict that the eigenvectors of the latent factor volatility matrices are time-invariant.
However, several empirical studies have shown that eigenvectors are time-varying \citep{kong2018testing, kong2023discrepancy, su2017time}.
Thus, accommodating the eigenvector dynamics is essential to better account for the dynamics of large volatility matrices.

This paper proposes a novel prediction approach for a future large volatility matrix.
Specifically, we represent a large volatility matrix process in a cubic (order-3 tensor) form by stacking large integrated volatility matrices over time to capture interday time series dynamics.
To address the high-dimensionality problem, we impose a low-rank factor structure \citep{chen2023statistical, de2000multilinear, kolda2009tensor} and sparse idiosyncratic structure on the tensor.
The low-rank tensor component represents a conditional expected factor volatility tensor, which follows a semiparametric factor structure \citep{chen2024}, and we apply the Projected-PCA \citep{fan2016projected} procedure to estimate the time series loading matrix.
To account for the sparse idiosyncratic volatility structure, after removing the projected factor volatility component, we adopt the principal orthogonal component thresholding (POET) \citep{fan2013large} procedure. 
This method is called the Projected Tensor POET (PT-POET) procedure.
We then derive convergence rates for the projected integrated volatility matrix estimator and the predicted large volatility matrix using the PT-POET approach.
In an empirical study, we demonstrate that the proposed PT-POET estimator performs well in out-of-sample predictions for the one-day-ahead large volatility matrix and in a minimum variance portfolio allocation analysis using high-frequency trading data.

The rest of the paper is organized as follows. 
Section \ref{section2} establishes the model and introduces the PT-POET method to predict the conditional expected large volatility matrix. 
Section \ref{asymp} develops an asymptotic analysis of the PT-POET estimator. 
The merits of the proposed method are demonstrated by a simulation study in Section \ref{simulation} and by real data application on predicting the one-day-ahead volatility matrix and portfolio allocation in Section \ref{empiric}. 
Section \ref{conclusion} concludes the study. 
All proofs are presented in the online supplement file.

\section{Model Setup and Estimation Procedure} \label{section2}
Throughout this paper, we denote by $\|\bA\|_{F}$, $\|\bA\|_{2}$ (or $\|\bA\|$ for short), $\|\bA\|_{1}$, $\|\bA\|_{\infty}$, and $\|\bA\|_{\max}$ the Frobenius norm,  operator norm, $l_{1}$-norm, $l_{\infty}$-norm, and elementwise norm, which are defined, respectively, as $\|\bA\|_{F} = \tr^{1/2}(\bA'\bA)$, $\|\bA\| = \lambda_{\max}^{1/2}(\bA'\bA)$, $\|\bA\|_{1} = \max_{j}\sum_{i}|a_{ij}|$, $\|\bA\|_{\infty} = \max_{i}\sum_{j}|a_{ij}|$, and $\|\bA\|_{\max} = \max_{i,j}|a_{ij}|$. 
We use $\lambda_{\min}(\bA)$ and $\lambda_{\max}(\bA)$ to denote the minimum and maximum eigenvalues of a matrix $\bA$.
We denote by $\sigma_{i}(\bA)$ the $i$-th largest singular value of $\bA$.
When $\ba$ is a vector, the maximum norm is denoted as $\|\ba\|_{\infty}=\max_{i}|a_{i}|$, and both $\|\ba\|$ and $\|\ba\|_{F}$ are equal to the Euclidean norm. 

For a tensor $\cA \in \mathbb{R}^{I_1 \times I_2 \times I_3}$, we define its mode-1 matricization as a $I_1 \times I_2 I_3$ matrix $\mathcal{M}_1(\cA)$ such that 
$[\mathcal{M}_1(\cA)]_{i_1, i_2 + (i_3 - 1)I_2} = a_{i_1 i_2 i_3}$ for all $i_1 \in [I_1], i_2 \in [I_2], i_3 \in [I_3]$. 
For a tensor $\mathcal{F} \in \mathbb{R}^{R_1 \times R_2 \times R_3}$ and a matrix $\mathbf{A}_1 \in \mathbb{R}^{I_1 \times R_1}$, the mode-1 product is a mapping defined as 
$\times_1 : \mathbb{R}^{R_1 \times R_2 \times R_3} \times \mathbb{R}^{I_1 \times R_1} \mapsto \mathbb{R}^{I_1 \times R_2 \times R_3}$ as 
$\mathcal{F} \times_1 \mathbf{A}_1 = [\sum_{r_1 = 1}^{R_1} a_{i_1 r_1} f_{r_1 r_2 r_3}]_{i_1 \in [I_1], r_2 \in [R_2], r_3 \in [R_3]}.$
Similarly, we can define mode matricization and mode product for mode-2 and mode-3, respectively.

\subsection{A Model Setup} \label{model}

Denote by $\bX^{l}(t) = (X^{l}_{1}(t),\dots,X^{l}_{p}(t))^{\top}$ the vector of true log-prices of $p$ assets at the $l$-th day and intraday time $t \in [0,1]$.
To account for cross-sectional dependence, we consider the following factor-based jump diffusion model: for each $l = 1,\dots, D$,
\begin{equation} \label{diffusion-def}
d\bX^{l}(t) = \bmu^{l}(t)dt + \bB^{l}(t)d\bff^{l}(t) + d \bu^{l}(t) + \bJ^{l}(t)d \bLambda^{l}(t),   
\end{equation}
where $\bmu^{l}(t) \in \RR^{p}$ is a drift vector, $\bB^{l}(t) \in \RR^{p\times r}$ is an unknown factor loading matrix, $\bff^{l}(t) \in \RR^{r}$ is a latent factor process, and $\bu^{l}(t)$ is an idiosyncratic process.
In addition, for the jump part, $\bJ^{l}(t) = (J_{1}(t),\dots,J_{p}(t))^{\top}$ is a jump size vector, and $\bLambda^{l}(t) = (\Lambda^{l}_{1}(t),\dots,\Lambda^{l}_{p}(t))^{\top}$ is a $p$-dimensional Poisson process with an intensity vector $\bI(t) = (I_{1}(t),\dots,I_{p}(t))^{\top}$.
Assume that the latent factor and idiosyncratic processes $\bff^{l}(t)$ and $\bu^{l}(t)$ follow the continuous-time diffusion models as follows: for each $l = 1,\dots,D$,
\begin{equation}
    d\bff^{l}(t) = \bvartheta^{l\top}(t)d\bW^{l}(t) \text{ and } d\bu^{l}(t) = \bsigma^{l\top}(t)d\bW^{l*}(t),
\end{equation}
where $\bvartheta^{l}(t)$ is an $r_{1} \times r_{1}$ matrix, $\bsigma^{l}(t)$ is a $p \times p$ matrix, $\bW^{l}(t)$ and $\bW^{l*}(t)$ are $r_{1}$-dimensional and $p$-dimensional independent Brownian motions, respectively.

Stochastic processes $\bmu^{l}(t), \bX^{l}(t), \bff^{l}(t), \bu^{l}(t), \bB^{l}(t), \bsigma^{l}(t)$ and $\bvartheta^{l}(t)$ are defined on a filtered probability space $(\Omega, \cI, \cI_t, t \in [0, \infty)\}, P)$ with filtration $\cI_{t}$ satisfying the usual conditions. 
We note that the time unit in our applications is the day, and high-frequency intra-daily asset data is observed.
The instantaneous volatility of $\bX^{l}(t)$ is
\begin{equation}
\bgamma^{l}(t) = (\gamma^{l}_{ij}(t))_{i,j=1,\dots,p} = \bB^{l}(t) \bvartheta^{l\top}(t) \bvartheta^{l}(t) \bB(t)^{l\top} + \bsigma^{l\top}(t) \bsigma^{l}(t),
\end{equation}
and the integrated volatility for the $l$-th day is
\begin{equation} \label{integrated_vol}
\bGamma_l = (\Gamma_{l,ij})_{i,j=1,\dots,p} = \int_{l-1}^l \bgamma^{l}(t) dt = \bPsi_l + \bSigma_l,    
\end{equation}
where $\bPsi_l = \int_{l-1}^l\bB^{l}(t) \bvartheta^{l\top}(t) \bvartheta^{l}(t) \bB(t)^{l\top} dt$ and $\bSigma_l = \int_{l-1}^l\bsigma^{l\top}(t) \bsigma^{l}(t) dt$.
For each $l =1,\dots,D$,  the integrated volatility matrix $\bGamma_{l}$ has the low-rank plus sparse structure \citep{ait2017using, kim2019factor, fan2008high, fan2013large}.
Specifically, the factor volatility matrices $\bPsi_{l}$ has the finite rank $r_{1}$, and the idiosyncratic volatility matrices $\bSigma_{l} = (\Sigma_{l,ij})_{i,j =1,\dots,p}$ is sparse as follows:
\begin{align}
    \max_{1\leq l \leq D}\max_{1\leq i\leq p}\sum_{1\leq j\leq p}|\Sigma_{l,ij}|^{\eta}(\Sigma_{l,ii}\Sigma_{l,jj})^{(1-\eta)/2} = O(s_{p}),
\end{align}
for some $\eta \in [0,1)$, the sparsity measure $s_{p}$ diverges slowly, such as $\log p$.
We note that when $\eta=0$, $s_{p}$ measures the maximum number of non-zero elements in each row of $\bSigma_{l}$.

We can write the integrated volatility matrix process of the model \eqref{integrated_vol} in a cubic (order-3 tensor) form as follows: 
\begin{equation} \label{tensor model}
    \cY = (\bGamma_{l})_{l=1,\dots, D}  = \cF \times_{1} \bQ \times_{2} \bQ \times_{3} \bV + \cE := \cS + \cE,
\end{equation}
where $\cY \in \RR^{p\times p \times D}$, $\cF$ is the $r_{1}\times r_{1} \times r_{2}$ latent tensor factor, $\bQ=(q_{i,k_{1}})_{i=1,\dots,p, k_{1} = 1,\dots, r_{1}}$ is the $p \times r_{1}$ loading matrix corresponds to the integrated volatility matrix, and $\bV = (v_{l,k_{2}})_{l=1,\dots,D, k_{2} = 1,\dots, r_{2}}$ is the $D \times r_{2}$ time series loading matrix corresponds to daily volatility dynamics.
$\cS$ is the factor volatility tensor, which has a Tucker decomposition such that $\bQ$ and $\bV$ are orthonormal matrices of the left singular vectors of $\cM_{1}(\cS)$ and $\cM_{3}(\cS)$, respectively.
We refer to $\cE = (\bSigma_{l})_{l=1,\dots,D}$ as the idiosyncratic volatility tensor.
We note that the time series loading matrix explains daily volatility dynamics, which are often driven by past realized volatilities \citep{corsi2009simple, hansen2012realized,  kim2019factor, kim2016unified, shephard2010realising, song2021volatility}.
For instance, following the HAR model \citep{corsi2009simple}, we can consider $v_{l,1} = b_0 + b_1 \zeta_{l-1} + b_2  \frac{1}{5}\sum_{j=1}^5 \zeta_{l-j} +  b_3 \frac{1}{21}\sum_{j=1}^{21} \zeta_{l-j}$, where $\zeta_l$ is the $l$-th day realized largest eigenvalue. 
This feature motivates the representation of the cubic structure in the model \eqref{tensor model}, and we propose a generalized model hereafter.

In this paper, our target is to predict the one-day-ahead integrated volatility matrix.
In general, we assume that $v_{l, k_{2}}$ is $\cI_{l-1}$-adapted and the idiosyncratic volatility matrices $\bSigma_{l}$ are martingale processes such that $E(\bSigma_{D+1} | \cI_{D}) = \bSigma_{D}$ a.s. 
Thus, given the current information $\cI_{D}$, we can predict the integrated volatility matrix as follows.  
The conditional expected large volatility matrix $\bGamma_{D+1}$ is
\begin{equation}
    E\left[\bGamma_{D+1} | \cI_{D}\right] =  \cF \times_{1} \bQ \times_{2} \bQ \times_{3} \bv_{D+1} + \bSigma_{D} \text{ a.s.,}
\end{equation}
where $\bv_{D+1} := (v_{D+1,1}, \dots v_{D+1,k_{2}})$.
Based on \eqref{integrated_vol} and \eqref{tensor model}, we consider the following nonparametric structure on the time series loading components, which is modeled as additive via sieve approximations \citep{fan2016projected}:  for each $k \leq r_{2}$ and $l \leq D$,
  \begin{align}\label{nonparametric function}
  	v_{l,k} := g_{k}(\bx_{l}) = \phi(\bx_{l})'\ba_{k} + R_{k}(\bx_{l}),
  \end{align}
  where $\bx_{l} = (x_{i1},\dots,x_{ld})$ is observable covariates that explain the time series loading vectors, $\phi(\bx_{l})$ is a $(Jd)\times 1$ vector of basis functions, $\ba_{k}$ is a $(Jd)\times 1$ vector of sieve coefficients, and $R_{k}(\bx_{l})$ is the approximation error term.
In this context, for example, $\bx_{l}$ can be the past eigenvalues in the VAR model \citep{shin2021factor} or realized largest eigenvalues of yesterday, last week, and last month in the HAR model \citep{corsi2009simple}.
We note that \eqref{nonparametric function} can be written as $g_{k}(\bx_{l}) = \sum_{m=1}^{d}g_{km}(x_{lm})$, where $g_{km}(x_{lm}) = \sum_{j=1}^{J}b_{j,km}\phi_{j}(x_{jm})+R_{km}(x_{jm})$.
Hence, the additive component of $g_{k}$ can be estimated by the sieve method.
We assume that $d$ is fixed, and the number of sieve terms $J$ grows very slowly as $D \rightarrow \infty$.
  In a matrix form, we can write
\begin{align}
    \bV := \bG(\bX) = \bPhi(\bX)\bA + \bR(\bX),
\end{align}
where the $D \times (Jd)$ matrix $\bPhi(\bX) = (\phi(\bx_{1}), \dots, \phi(\bx_{D}))'$, the $(Jd)\times r_{2}$ matrix $\bA = (\ba_{1},\dots, \ba_{r_{2}})$, and $\bR(\bX) = (R_{k}(\bx_{l}))_{D\times r_{2}}$.

The true underlined log-price $\bX^{l}(t)$ in \eqref{diffusion-def} cannot be observed because of the imperfections of the trading mechanisms.
Hence, we assume that the high-frequency intraday observations are contaminated by microstructure noises:
\begin{align}
    Y_{i}(t_{l,k}) = X_{i}(t_{l,k}) + u_{i}(t_{l,k}), \quad i = 1,\dots, p, l = 1,\dots, D, k = 0,\dots, m,
\end{align}
where $l-1 = t_{l,0}<\cdots < t_{l,m} = l$,  and the microstructural noises are random variables with a mean of zero.

Several studies have developed a non-parametric integrated volatility matrix that is robust to jumps and dependent structures of the microstructure noise \citep{ait2016increased, barndorff2011subsampling, bibinger2015econometrics, jacod2009microstructure, koike2016quadratic, li2022remedi, shin2023adaptive}.
We can employ any well-performing realized volatility matrix estimator that satisfies Assumption \ref{basic_assumptions} (v).
In the numerical study, we utilize the jump-adjusted pre-averaging realized volatility matrix (PRVM) estimator \citep{ait2016increased,christensen2010pre, jacod2009microstructure} as described in \eqref{PRVM}.

\subsection{Projected Tensor POET} \label{estimation procedure}

To utilize the semiparametric structure outlined in Section \ref{model}, it is necessary to project the time series loading vectors onto linear spaces spanned by the corresponding covariates.
For this purpose, we apply the Projected-PCA \citep{fan2016projected} procedure to the time series loading matrix with the well-performing integrated volatility tensor estimator. 
The specific procedure is as follows:
\begin{enumerate}
  	\item For each $l \leq D$, we estimate the integrated volatility matrix, $\bGamma_{l}$, using a non-parametric estimation method with high-frequency log-price observations and denote them by $\hat{\bGamma}_{l}$. 
    Let $\Bar{\bPsi}_{l} = \sum_{i=1}^{r_1}\bar{\delta}_{l,i}\bar{\xi}_{l,i}\bar{\xi}_{l,i}'$, where $\{\bar{\delta}_{l,i},\bar{\xi}_{l,i}\}_{i=1}^{p}$ are the eigenvalues and eigenvectors of $\hat{\bGamma}_{l}$ in decreasing order.
    Let $\hat{\cY} = (\hat{\bGamma}_{l})_{l=1,\dots,D}$ and $\bar{\cS} = (\bar{\bPsi}_{l})_{l=1,\dots,D}$, which are $p \times p \times D$ tensors.
   \item  Define the $D\times D$ projection matrix as $\bP = \bPhi(\bX)(\bPhi(\bX)'\bPhi(\bX))^{-1}\bPhi(\bX)'$.
   The columns $\hat{\bQ}$ are defined as the $r_{1}$ leading left singular vectors of $\cM_{1}(\tilde{\cS})$, where $\tilde{\cS} = \bar{\cS} \times_{3} \bP$.
   \item The columns $\hat{\bV} :=\hat{\bG}(\bX)$ are defined as the $r_{2}$ leading left singular vectors of the $\cM_{3}(\tilde{\cS})$.
   Then, we can estimate $\bA$ by
  	\begin{align*}
  	\hat{\bA} = (\hat{\ba}_{1},\dots,\hat{\ba}_{r_{2}}) = (\bPhi(\bX)'\bPhi(\bX))^{-1}\bPhi(\bX)'\hat{\bG}(\bX).
  	\end{align*}
    Given any $\bx \in \calX$, we estimate $g_{k}(\cdot)$ by
   $$
   \hat{g}_{k}(\bx) = \phi(\bx)'\hat{\ba}_{k} \qquad \text{for} \quad k = 1,\dots, r_{2},
   $$
   where $\calX$ denotes the support of $\bx_{i}$.
   \item We estimate the latent tensor factor by
   \begin{equation*}
       \hat{\cF} = \tilde{\cS} \times_{1} \hat{\bQ}^{\top} \times_{2} \hat{\bQ}^{\top} \times_{3} \hat{\bG}(\bX)^{\top}.
   \end{equation*}
   Then, we compute the principal orthogonal complement tensor as follows:
    \begin{equation*}
        \hat{\cE} = \hat{\cY} - \hat{\cS},
    \end{equation*}
    where the factor volatility tensor estimator $\hat{\cS} = \hat{\cF} \times_{1} \hat{\bQ} \times_{2} \hat{\bQ} \times_{3} \hat{\bG}(\bX)$.

    \item Define $\hat{\cE} := (\tilde{\bSigma}_{l})_{l=1,\dots,D}$. For each $l$, we apply the adaptive thresholding method to $\tilde{\bSigma}_{l} = (\tilde{\Sigma}_{l,ij})_{1\leq i,j \leq p}$:
    \begin{equation*}
			\hat{\bSigma}_{l} = (\hat{\Sigma}_{l,ij})_{1\leq i,j\leq p}, \;\;\;\;\; \hat{\Sigma}_{l,ij} = \left\{ \begin{array}{ll}
				\tilde{\Sigma}_{l,ij} \vee 0, & i=j \\ 
				s_{ij}(\tilde{\Sigma}_{l,ij})I(|\tilde{\Sigma}_{l,ij}| \geq \tau_{ij}), & i \neq j
			\end{array}\right., 
		\end{equation*}
		where an entry-dependent threshold $\tau_{ij} = \tau\sqrt{(\tilde{\Sigma}_{l,ii}\vee 0)(\tilde{\Sigma}_{l,jj}\vee 0)}$ and $s_{ij}(\cdot)$ is a generalized thresholding function (e.g., hard or soft thresholding; see  \citealt{cai2011adaptive, rothman2009generalized}).
    The thresholding constant $\tau$ will be determined in Theorem \ref{main_thm}.

    \item Finally, we predict the conditional expected volatility matrix $E\left[\bGamma_{D+1} | \cI_{D}\right]$ by
    \begin{equation*}
        \hat{\bGamma}_{D+1} = \hat{\cF} \times_{1} \hat{\bQ} \times_{2} \hat{\bQ} \times_{3} \hat{\bg}(\bx_{D+1}) + \hat{\bSigma}_{D+1},
    \end{equation*}
    where $\hat{\bg}(\bx_{D+1}) = (\hat{g}_{1}(\bx_{D+1}),\dots, \hat{g}_{r_{2}}(\bx_{D+1}))$ and $\hat{\bSigma}_{D+1} = \frac{1}{D}\sum_{l=1}^{D}\hat{\bSigma}_{l}$.
  \end{enumerate}

In summary, given the estimated integrated volatility tensor, we apply the Projected-PCA method \citep{fan2016projected} to estimate the unknown nonparametric function using observable covariates, such as a series of past realized volatilities. 
Based on the projected tensor data, we then use tensor singular value decomposition to estimate loading matrices as well as the latent tensor factor.
Then, we apply the thresholding method to the remaining residual component after removing the factor volatility tensor estimator.
Finally,  we predict the one-day-ahead realized volatility matrix by multiplying the estimated tensor factor and loading components, using the observable covariates $\bx_{D+1}$, such as the realized volatility information on the $D$th day.
We call this procedure the Projected Tensor Principal Orthogonal complEment Thresholding (PT-POET).
The PT-POET method can accurately predict the factor volatility matrix by incorporating daily volatility dynamics via the projection approach based on the tensor structure.
The numerical analyses in Sections \ref{simulation} and \ref{empiric} demonstrate that PT-POET performs well in predicting a one-day-ahead realized volatility matrix.

\subsection{Choice of Tuning Parameters} \label{tuning}

To implement PT-POET, we need to determine tuning parameters $r_{1}$, $r_{2}$, and $J$. 
Several studies suggested data-driven methods to consistently estimate the number of factors by finding the largest singular value gap or singular value ratio  \citep{ahn2013eigenvalue, bai2002determining, lam2012factor, onatski2010determining}.
In this context, the rank of each dimension can be determined based on the matricized tensor \citep{chen2024}.
Specifically, $r_{1}$ and $r_{2}$ can be estimated as follows: for each $s \in \{1,3\}$, $\hat{r}_{s} = \argmax_{k\leq r_{\max}} (\sigma_{k}(\cM_{s}(\hat{\cY}))-\sigma_{k+1}(\cM_{s}(\hat{\cY})))$ or $\hat{r}_{s} = \argmax_{k\leq r_{\max}} \frac{\sigma_{k}(\cM_{s}(\hat{\cY}))}{\sigma_{k+1}(\cM_{s}(\hat{\cY}))}$ for a predetermined maximum number of factors $r_{\max}$.
For the numerical studies in Sections \ref{simulation} and \ref{empiric}, we employed $\hat{r}_{1}=3$ and $\hat{r}_{2} = 1$ using the eigenvalue ratio method proposed by \cite{ahn2013eigenvalue} and the rank choice method considered in \cite{ait2017using}.

Practitioners can flexibly choose the number of sieve terms, $J$, and the basis functions based on conjectures about the form of the nonparametric function \citep{chen2024, fan2016projected}.
In this paper, since the daily integrated volatility dynamic is a linear function of past realized volatilities, we employed an additive polynomial basis with a sieve dimension of $J=2$ for the numerical studies in Sections \ref{simulation} and \ref{empiric}.

\section{Asymptotic Properties} \label{asymp}
This section establishes the asymptotic properties of the proposed PT-POET estimator.
To do this, we impose the following technical assumptions.

    \begin{assum} \label{basic_assumptions} ~
      \begin{itemize}
            \item[(i)] Let $D_{\delta} = \min\{\tilde{\delta}_{l,i} - \tilde{\delta}_{l,i+1}, i = 1,\dots, r_{1}, l = 1,\dots, D\}$, where $\tilde{\delta}_{l,i}$ is the leading eigenvalues of $\bPsi_{l}$ and $\tilde{\delta}_{l,i+1} = 0$.
            For some positive constant $C_{1}$, $D_{\delta} \geq C_{1}p$ a.s.
            \item[(ii)] For some fixed constant $C_{2}$, we have
            $$
            \frac{p}{r_{1}}\max_{1\leq i\leq p}\sum_{j=1}^{r_{1}}\iota_{l,ij}^{2} \leq C_{2},
            $$            
            where $\tilde{\xi}_{l,j} = (\iota_{l,1j},\dots, \iota_{l,pj})^{\top}$ is the $j$th eigenvector of $\bPsi_{l}$ for each $l \leq D$.
            \item[(iii)] For $k_{1} \leq r_{1}$ and $k_{2} \leq r_{2}$, the eigengap satisfies 
            \begin{align*}
                &|\sigma_{k_{1}}(\cM_{1}(\cF)) -\sigma_{k_{1}+1}(\cM_{1}(\cF))| = O_{P}(p\sqrt{D}),\\
                &|\sigma_{k_{2}}(\cM_{3}(\cF)) -\sigma_{k_{2}+1}(\cM_{3}(\cF))| = O_{P}(p\sqrt{D}).
            \end{align*}
            In addition, $\cM_{m}(\cF)\cM_{m}^{\top}(\cF)$ is a diagonal matrix with non-zero decreasing singular values for $m=1,2,3$.
            \item[(iv)] There exist  constants $c_{1},c_{2}>0$ such that $\lambda_{\min}(\bSigma_{l}) > c_{1}$ and $\|\bSigma_{l}\|_{1} \leq c_{2}s_{p}$ for each $l \leq D$.
            \item[(v)] The estimated nonparametric estimator $\hat{\bGamma}_{l}$ satisfies
              $$
              \Pr\Bigg\{\max_{1\leq l\leq D}\| \hat{\bGamma}_{l}-\bGamma_{l}\|_{\max} \geq C\sqrt{\frac{\log(pD \vee m)}{m^{1/2}}}\Bigg\} \leq p^{-1},
              $$
              where $m$ is the number of observations of the process $\bX$ each day.
        \end{itemize}
  \end{assum}
  \begin{remark}
    Assumption \ref{basic_assumptions}(i) and (ii) are the pervasive condition and incoherence condition for daily integrated volatility matrices, respectively.
    These assumptions are often imposed when analyzing the approximate factor models or the low-rank matrix inference \citep{ait2017using, bai2003inferential, fan2013large, fan2018eigenvector, kim2019factor, stock2002forecasting}.
    Assumption \ref{basic_assumptions}(ii) is the eigengap assumption for the tensor factor $\cF$, which is essential for analyzing low-rank matrices \citep{candes2010matrix, cho2017asymptotic, fan2018large}.
    Since we have a $p\times p \times D$ realized volatility tensor, the pervasive condition implies that the eigenvalue for the low-rank component has $p\sqrt{D}$ order based on the matricized tensor for each mode.
    To analyze large tensor inferences, we impose the element-wise convergence condition (Assumption \ref{basic_assumptions}(v)).
    This condition can be satisfied under the bounded instantaneous volatility condition \citep{tao2013fast}. 
    Furthermore, under the locally bounded condition of the instantaneous volatility process with heavy-tailed observations, we can obtain the element-wise convergence condition \citep{fan2018robust, shin2021factor}.
  \end{remark}

    \begin{assum}\label{assum_phi} ~
        \begin{itemize}
              \item[(i)] There are $c_{\min}$ and $c_{\max}>0$ so that, with the probability approaching one, as $D \rightarrow \infty$,
              $$
              c_{\min} < \lambda_{\min}(D^{-1}\bPhi(\bX)'\bPhi(\bX))< \lambda_{\max}(D^{-1}\bPhi(\bX)'\bPhi(\bX))<c_{\max}.
              $$
              \item[(ii)] $\max_{j\leq J,l\leq D, d'\leq d} E\phi_{j}(x_{ld'})^{2} < \infty$, and $\max_{k\leq r_{2},l\leq D}Eg_{k}(\bx_{l})^{2} < \infty$.
        \end{itemize}
    \end{assum}
Assumption \ref{assum_phi} pertains to the basis functions. Intuitively, the strong law of large numbers implies Assumption \ref{assum_phi}(i), which can be satisfied by normalizing commonly used basis functions such as B-splines, polynomial series, or Fourier bases.

  \begin{assum} \label{assum_sieve}
      For all $d'\leq d, k \leq r_{2}$, 
      \begin{itemize}
          \item[(i)] the functions $g_{kd'}(\cdot)$ belong to a H\"older class $\mathcal{G}$ defined by, for some $L>0$,
		\begin{align*}
			&\mathcal{G} = \{g : |g^{(c)}(s)-g^{(c)}(t)| \leq L|s-t|^{\alpha}\};
        \end{align*}
          \item[(ii)] the sieve coefficients $\{a_{j,kd'}\}_{j\leq J}$ satisfy for $\kappa = 2(c+\alpha) \geq 4$, as $J \rightarrow \infty$, 
		\begin{equation*}
			\sup_{x \in \mathcal{X}_{d'}}\left|g_{kd'}(x)-\sum_{j=1}^{J}a_{j,kd'}\phi_{s}(x)\right|^{2} = O(J^{-\kappa}/D),
		\end{equation*}
		where $\mathcal{X}_{d}$ is the support of the $d'$th element of $\bx_{i}$;
          \item[(iii)] $\max_{k\leq r_{2},j\leq J,d'\leq d}a_{j,kd'}^2 = O(1/D)$.
        \end{itemize}
  \end{assum}

Assumption \ref{assum_sieve} is related to the accuracy of the sieve approximation and can be satisfied using a common basis such as a polynomial basis or B-splines \citep{chen2007large}.

We obtain the following elementwise convergence rates of the projected factor volatility matrix and sparse volatility matrix estimators.
    \begin{pro} \label{element wise norm}
    Suppose that Assumptions \ref{basic_assumptions}--\ref{assum_sieve} hold and $J = o(\sqrt{D})$. 
    Let $\omega_{m} =  m^{-\frac{1}{4}}\sqrt{J \log(pD \vee m)} +s_{p}\sqrt{J}/p+ J^{\frac{1-\kappa}{2}}$ and $\tau \asymp \omega_{m}$.
    As $m,p,D,J \rightarrow \infty$, we have
        \begin{align}
            &\max_{1\leq l \leq D}\|\hat{\bPsi}_{l} - \bPsi_{l}\|_{\max} = O_{P}(\omega_{m}),\label{factor_elementwise} \\
            &\max_{1\leq l \leq D}\|\hat{\bSigma}_{l} -\bSigma_{l}\|_{\max} = O_{P}(\omega_{m}), \quad \max_{1\leq l \leq D}\|\hat{\bSigma}_{l} -\bSigma_{l}\| = O_{P}(s_{p}\omega_{m}^{1-\eta}).\label{idio_norms}
        \end{align}    
    \end{pro}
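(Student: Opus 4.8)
The plan is to propagate the elementwise error of the raw volatility tensor estimator $\hat{\cY}$ through the three estimation steps (projection, tensor SVD, thresholding) in the same spirit as the Projected-POET analysis of \cite{fan2016projected} and \cite{fan2013large}, but now in the order-3 tensor setting. First I would set up notation for the estimation error at the matricization level: writing $\hat{\cS}=\bar{\cS}\times_3\bP$ after mode-3 projection and comparing $\cM_1(\tilde{\cS})$ with $\cM_1(\cS)$, and likewise $\cM_3(\tilde{\cS})$ with $\cM_3(\cS)$. By Assumption \ref{basic_assumptions}(v) we have $\max_l\|\hat{\bGamma}_l-\bGamma_l\|_{\max}=O_P(m^{-1/4}\sqrt{\log(pD\vee m)})$, and then $\|\bar{\bPsi}_l-\bPsi_l\|_{\max}$ inherits this rate after the leading $r_1$-term spectral truncation, using the pervasiveness gap $D_\delta\geq C_1 p$ from Assumption \ref{basic_assumptions}(i) together with a Davis--Kahan/$\sin\Theta$ bound on the eigenvectors $\bar{\xi}_{l,i}$ relative to $\tilde{\xi}_{l,i}$, plus the incoherence bound in (ii) to convert operator-norm perturbations into elementwise control of $\bar{\bPsi}_l$. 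The extra $s_p\sqrt{J}/p$ term in $\omega_m$ arises from the sieve-approximation-induced leakage of $\bSigma_l$ into the projected low-rank part (since $\bP$ does not annihilate the idiosyncratic component exactly, only up to the sieve residual $\bR(\bX)$), and the $J^{(1-\kappa)/2}$ term is precisely the sieve bias carried by Assumption \ref{assum_sieve}(ii).

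Next I would handle the loading estimators. Using Assumption \ref{assum_phi}(i), the projection matrix $\bP$ is well-conditioned on its range, so $\cM_3(\tilde{\cS}) = \cM_3(\cS)\bP^\top + (\text{error})$, and Assumption \ref{basic_assumptions}(iii) gives the eigengap $|\sigma_{k_s}(\cM_s(\cF))-\sigma_{k_s+1}(\cM_s(\cF))|=O_P(p\sqrt{D})$ which, via the pervasive factor strength, translates into a spectral gap of order $p^2\sqrt{D}$ (resp.\ $p D$) for $\cM_1(\cS)\cM_1(\cS)^\top$ (resp.\ $\cM_3(\cS)\cM_3(\cS)^\top$). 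A standard subspace perturbation argument then yields $\|\hat{\bQ}\hat{\bQ}^\top-\bQ\bQ^\top\|$ and $\|\hat{\bG}(\bX)\hat{\bG}(\bX)^\top - \bV\bV^\top\|$ at the corresponding relative rate, and Assumption \ref{assum_sieve}(iii) controls the additive sieve-coefficient magnitudes needed to pass from $\hat{\bG}(\bX)$ to $\hat{\ba}_k$ and then to $\hat{g}_k(\bx)$ pointwise. Assembling $\hat{\cS}=\hat{\cF}\times_1\hat{\bQ}\times_2\hat{\bQ}\times_3\hat{\bG}(\bX)$ and using the triangle inequality across the three modes gives \eqref{factor_elementwise}: $\max_l\|\hat{\bPsi}_l-\bPsi_l\|_{\max}=O_P(\omega_m)$.

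For \eqref{idio_norms}, since $\tilde{\bSigma}_l$ equals $\bSigma_l$ plus the negative of the factor-part error plus the raw estimation error, the elementwise bound $\max_l\|\tilde{\bSigma}_l-\bSigma_l\|_{\max}=O_P(\omega_m)$ follows directly from \eqref{factor_elementwise} and Assumption \ref{basic_assumptions}(v). The adaptive thresholding step then behaves as in \cite{cai2011adaptive}: with $\tau\asymp\omega_m$ and the variance bounds from Assumption \ref{basic_assumptions}(iv) ($\lambda_{\min}(\bSigma_l)>c_1$ controls the diagonal normalizers from below, $\|\bSigma_l\|_1\le c_2 s_p$ bounds the row sparsity), one shows $\hat{\bSigma}_l$ recovers all entries exceeding the threshold and kills the rest with high probability, preserving $\max_l\|\hat{\bSigma}_l-\bSigma_l\|_{\max}=O_P(\omega_m)$ and, by the usual sparsity-norm bookkeeping over the $s_p$ effective nonzeros per row, $\max_l\|\hat{\bSigma}_l-\bSigma_l\|=O_P(s_p\omega_m^{1-\eta})$. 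The main obstacle I anticipate is bookkeeping the three distinct error sources in the tensor matricizations simultaneously — in particular, tracking how the sieve-residual leakage $\bR(\bX)$ and the idiosyncratic tensor $\cE$ propagate through both the mode-1 and mode-3 singular subspaces and the Tucker core $\hat{\cF}$ without losing a factor of $p$ or $D^{1/2}$ — and verifying uniformity over $l\le D$ in the $O_P$ statements (rather than pointwise), which is where the union bound over $D$ days and the $\log(pD\vee m)$ factor in $\omega_m$ are used.
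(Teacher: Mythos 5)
Your overall architecture matches the paper's: elementwise control of the per-day rank-$r_1$ truncation $\bar{\bPsi}_l$ from the raw PRVM error, propagation through the mode-3 projection and the mode-1/mode-3 singular subspaces of the matricized tensor, assembly of $\hat{\bPsi}_{l} = \hat{\bQ}\,\cM_{1}(\hat{\cF})\,(\hat{\bv}_{l}\otimes\hat{\bQ})^{\top}$ via a four-term triangle inequality, and the Cai--Liu / Fan--Liao--Mincheva adaptive-thresholding argument for \eqref{idio_norms}. Your identification of the three sources of $\omega_m$ (estimation error, idiosyncratic leakage through $\bP$, sieve bias) is also correct.

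The step that would fail as you describe it is the passage from operator-norm spectral perturbation to elementwise control. A Davis--Kahan/$\sin\Theta$ bound gives $\|\bar{\xi}_{l,i}-\xi_{l,i}\|_{2}\lesssim \|\hat{\bGamma}_{l}-\bGamma_{l}\|/D_{\delta}$, and incoherence of the \emph{population} eigenvectors (Assumption \ref{basic_assumptions}(ii)) says nothing about how the \emph{error} $\bar{\xi}_{l,i}-\xi_{l,i}$ is distributed across coordinates; the naive bound $\|\cdot\|_{\infty}\le\|\cdot\|_{2}$ loses a factor $\sqrt{p}$ relative to what is needed. Since $\bar{\bPsi}_{l}=\sum_{i\le r_1}\bar{\delta}_{l,i}\bar{\xi}_{l,i}\bar{\xi}_{l,i}'$ with $\bar{\delta}_{l,i}\asymp p$, that loss inflates the elementwise rate by a polynomial factor in $p$ and the claimed $O_{P}(\omega_m)$ would not follow. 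The paper instead invokes the dedicated $\ell_{\infty}$ eigenvector/singular-vector perturbation theorem of Fan, Wang and Zhong (2018) --- both for the per-day eigenvectors $\bar{\xi}_{l,i}$ and, crucially, for the leading left singular vectors of $\cM_{1}(\tilde{\cS})$ and $\cM_{3}(\tilde{\cS})$, which yields max-norm rates for $\hat{\bQ}-\bQ$ and $\hat{\bG}(\bX)-\bG(\bX)$ that are $1/\sqrt{p}$, resp.\ $1/\sqrt{D}$, sharper than the corresponding $\ell_{2}$ rates. Your proposal only claims projection-matrix bounds of the form $\|\hat{\bQ}\hat{\bQ}^{\top}-\bQ\bQ^{\top}\|$, which cannot deliver the max-norm conclusion \eqref{factor_elementwise}. (A minor point: for this in-sample proposition the paper works directly with the rows $\hat{\bv}_{l}$ of $\hat{\bV}$; the pointwise sieve estimates $\hat{g}_{k}(\bx)$ and the coefficient bounds you invoke are only needed for the out-of-sample Theorem \ref{main_thm}.)
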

\begin{remark} \label{remark1}
Proposition \ref{element wise norm} represents that the projected factor and idiosyncratic volatility matrix estimators have the convergence rate $m^{-\frac{1}{4}}\sqrt{J} + \frac{\sqrt{J}}{p} + J^{\frac{1}{2}-\frac{\kappa}{2}}$ up to the log order and the sparsity level under the max norm.
Specifically, $J$ relates to the sieve approximation, which accounts for the cost of approximating the unknown nonparametric function $g_k(\cdot)$.
If $g_k(\cdot)$ is known, the rate of convergence is $m^{-\frac{1}{4}} + p^{-\frac{1}{2}}$ up to the log order and the sparsity level, which is similar to that of \cite{kim2019factor}.
The term $m^{-1/4}$ reflects the cost of estimating the unobserved integrated volatility matrix using high-frequency data, which is the optimal rate for the instantaneous volatility estimator in the presence of microstructure noise.
The term  $p^{-1/2}$ represents the cost of identifying the latent factor volatility.
\end{remark}

The following theorem provides the convergence rate of the future volatility matrix estimator using the PT-POET method.
    \begin{thm} \label{main_thm}
    Suppose that Assumptions \ref{basic_assumptions}--\ref{assum_sieve} hold and $J = o(\sqrt{D})$.
    Let $\delta_{m,p,D} = (m^{-\frac{1}{4}}J\sqrt{\log(pD \vee m)} + s_{p}J/p + J^{1-\frac{\kappa}{2}} )\max_{j\leq J}\sup_{x}|\phi_{j}(x)|$, $\omega_{m} = m^{-\frac{1}{4}}\sqrt{J \log(pD \vee m)} +s_{p}\sqrt{J}/p+ J^{\frac{1-\kappa}{2}}$ and $\tau \asymp \omega_{m}$.
    As $m,p,D,J \rightarrow \infty$, we have
        \begin{align}
            &\|\hat{\bGamma}_{D+1} -E(\bGamma_{D+1}|\cI_{D})\|_{\max} = O_{P}(\delta_{m,p,D}),\label{thm_maxnorm}\\
            &\|\hat{\bGamma}_{D+1} -E(\bGamma_{D+1}|\cI_{D})\|_{\bGamma^{*}} = O_{P}\left(\delta_{m,p,D}+\frac{s_{p}}{\sqrt{p}} + \sqrt{p}\delta_{m,p,D}^{2} + s_{p}\omega_{m}^{1-\eta}\right),\label{thm_relativefro}
        \end{align}    
        where the relative Frobenius norm $\|\bA\|_{\bGamma^{*}}^{2} = p^{-1}\|\bGamma^{*-1/2}\bA\bGamma^{*-1/2}\|_{F}^{2}$ and $\bGamma^{*} = E(\bGamma_{D+1}|\cI_{D})$.
    \end{thm}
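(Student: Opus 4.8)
The plan is to split the prediction error into a factor part and an idiosyncratic part, $\hat\bGamma_{D+1}-E(\bGamma_{D+1}|\cI_D)=(\hat\bPsi_{D+1}-\bPsi_{D+1}^{*})+(\hat\bSigma_{D+1}-\bSigma_{D})$, where $\hat\bPsi_{D+1}=\hat\cF\times_{1}\hat\bQ\times_{2}\hat\bQ\times_{3}\hat\bg(\bx_{D+1})$ and $\bPsi_{D+1}^{*}=\cF\times_{1}\bQ\times_{2}\bQ\times_{3}\bv_{D+1}=E(\bPsi_{D+1}|\cI_D)$ (the last identity holds since $v_{D+1,k}$ is $\cI_{D}$-measurable). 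I would then bound each piece, first in $\|\cdot\|_{\max}$ to get \eqref{thm_maxnorm} and then in $\|\cdot\|_{\bGamma^{*}}$ to get \eqref{thm_relativefro}, reusing the estimates built in the proof of Proposition \ref{element wise norm} and adding only the bookkeeping needed to pass from the in-sample days $l\le D$ to the out-of-sample covariate $\bx_{D+1}$.

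For \eqref{thm_maxnorm}, the factor part is the crux. Because $\hat\bPsi_{D+1}$ uses the same $\hat\cF,\hat\bQ$ as the in-sample estimates and differs from them only through the mode-3 fibre $\hat\bg(\bx_{D+1})=\phi(\bx_{D+1})'\hat\bA$, I would invoke the ingredients behind Proposition \ref{element wise norm}: a Davis--Kahan / $\sin\Theta$ bound for the $r_{1}$ leading left singular vectors $\hat\bQ$ of $\cM_{1}(\tilde\cS)$ (and likewise $\hat\bG(\bX)$ from $\cM_{3}(\tilde\cS)$, hence $\hat\bA=(\bPhi(\bX)'\bPhi(\bX))^{-1}\bPhi(\bX)'\hat\bG(\bX)$ and $\hat\cF=\tilde\cS\times_{1}\hat\bQ^{\top}\times_{2}\hat\bQ^{\top}\times_{3}\hat\bG(\bX)^{\top}$), controlled through the pervasiveness/eigengap Assumptions \ref{basic_assumptions}(i),(iii), the element-wise rate Assumption \ref{basic_assumptions}(v), and — since $\tilde\cS=\bar\cS\times_{3}\bP$ projects onto the $O(Jd)$-dimensional sieve space — Assumptions \ref{assum_phi}--\ref{assum_sieve}. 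Writing $\bv_{D+1}=\phi(\bx_{D+1})'\bA+\bR(\bx_{D+1})$ and telescoping $\hat\cF\times_{1}\hat\bQ\times_{2}\hat\bQ\times_{3}\hat\bg(\bx_{D+1})-\cF\times_{1}\bQ\times_{2}\bQ\times_{3}\bv_{D+1}$ into the contributions of $\hat\bQ-\bQ$, $\hat\cF-\cF$, $\hat\bA-\bA$, and $\bR(\bx_{D+1})$ (each up to the appropriate rotation), every contribution is its in-sample order $\omega_{m}$ multiplied by a $\|\phi(\bx_{D+1})\|$-type quantity that is of order at most $\sqrt{J}\,\max_{j\le J}\sup_{x}|\phi_{j}(x)|$; collecting terms gives $O_{P}(\delta_{m,p,D})$. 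For the idiosyncratic part, write $\hat\bSigma_{D+1}-\bSigma_{D}=\frac1D\sum_{l=1}^{D}(\hat\bSigma_{l}-\bSigma_{l})+(\frac1D\sum_{l=1}^{D}\bSigma_{l}-\bSigma_{D})$; the first sum is $O_{P}(\omega_{m})$ uniformly by Proposition \ref{element wise norm}, and the second is handled with the martingale property $E(\bSigma_{D+1}|\cI_D)=\bSigma_{D}$ and Assumption \ref{basic_assumptions}(v), and is of no larger order. Since $\omega_{m}=O(\delta_{m,p,D})$, \eqref{thm_maxnorm} follows.

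For \eqref{thm_relativefro} I would first note $\lambda_{\min}(\bGamma^{*})\ge\lambda_{\min}(\bSigma_{D})\ge c_{1}>0$, so $\|\bGamma^{*-1}\|=O(1)$, while $\bGamma^{*}$ carries $r_{1}$ spiked eigenvalues of order $p$; then I would run the POET-type relative-norm computation of \cite{fan2013large} and \cite{kim2019factor}. The idiosyncratic contribution is bounded by $\|\cdot\|_{\bGamma^{*}}\le\|\bGamma^{*-1}\|\,\|\cdot\|$ together with the operator-norm rate $\|\hat\bSigma_{D+1}-\bSigma_{D}\|=O_{P}(s_{p}\omega_{m}^{1-\eta})$ from Proposition \ref{element wise norm}, plus an $s_{p}/\sqrt{p}$ term coming from the $\bGamma^{*-1/2}$-weighting of the sparse block and the use of the sample average $\frac1D\sum_{l}\hat\bSigma_{l}$ for $\bSigma_{D}$. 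The factor contribution uses that $\hat\bPsi_{D+1}-\bPsi_{D+1}^{*}$ has rank $O(r_{1})$ with entries of order $\delta_{m,p,D}$, hence Frobenius norm $O_{P}(p\,\delta_{m,p,D})$; after sandwiching by $\bGamma^{*-1/2}$ the first-order part of this error, which lies in the spiked $r_{1}$-dimensional eigenspace, is down-weighted (exactly as in the POET relative-norm analysis) so that only the max-norm-level term $\delta_{m,p,D}$ and the surviving second-order (cross) term $\sqrt{p}\,\delta_{m,p,D}^{2}$ remain. Adding the four pieces gives \eqref{thm_relativefro}.

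The main obstacle is the tensor-SVD perturbation analysis of the projected tensor $\tilde\cS=\bar\cS\times_{3}\bP$ together with the passage to the out-of-sample point $\bx_{D+1}$. One must propagate the element-wise rate of $\hat\bGamma_{l}$ through the steps $\hat\bGamma_{l}\mapsto\bar{\bPsi}_{l}\mapsto\tilde\cS$, then track the sign/rotation indeterminacies of $\hat\bQ$, $\hat\bG(\bX)$ and $\hat\cF$ consistently across all three modes and show that they are asymptotically orthogonal, and finally show that evaluating $\hat\bg(\cdot)=\phi(\cdot)'\hat\bA$ at $\bx_{D+1}$ costs exactly the extra factor $\sqrt{J}\,\max_{j\le J}\sup_{x}|\phi_{j}(x)|$ over the in-sample rate $\omega_{m}$ of Proposition \ref{element wise norm} — and no more. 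Verifying that this is the only additional price, and that the sieve-approximation error $\bR$ propagates through the Projected-PCA step without inflating the rate, is the technical heart of the argument.
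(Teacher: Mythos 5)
Your proposal follows essentially the same route as the paper: the same factor/idiosyncratic split, the same telescoping of $\hat{\cF}\times_{1}\hat{\bQ}\times_{2}\hat{\bQ}\times_{3}\hat{\bg}(\bx_{D+1})-\cF\times_{1}\bQ\times_{2}\bQ\times_{3}\bv_{D+1}$ with the extra $\sup_{x}\|\phi(x)\|$-type cost for the out-of-sample evaluation, and the same POET-style relative-Frobenius computation using the spiked SVD of $\bGamma^{*}$. The only cosmetic difference is bookkeeping: in the paper the $s_{p}/\sqrt{p}$ term arises inside the factor block $\Delta_{\Psi 1}$ (from the mismatch between the eigenstructure of $\bGamma^{*}$ and that of its low-rank part, i.e.\ the $\|E(\bSigma_{D+1}|\cI_{D})\|$ bias in the Weyl and $\sin\Theta$ bounds) rather than from the idiosyncratic contribution, but this does not change the argument or the final rate.
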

\begin{remark}
Theorem \ref{main_thm} shows that   PT-POET consistently predicts the conditional expected volatility matrix under both the max and relative Frobenius norms.
As discussed in Remark \ref{remark1}, there are $m^{-1/4}$, $\sqrt{J}/P$, and $J$ terms.
We note that out-of-sample predictions with any covariates $\bx$ require additional costs such as $\sqrt{J}$ and the supremum term $\max_{j\leq J}\sup_{x}|\phi_j(x)|$, whereas the result in Proposition \ref{element wise norm} does not require these costs since it pertains to in-sample prediction.
Under the relative Frobenius norm, the additional term $s_{p}/\sqrt{p}$ arises from estimating the factor component, while the term $s_{p}\omega_{m}^{1-\eta}$ results from estimating the sparse idiosyncratic component.
\end{remark}

\section{Simulation Study} \label{simulation}
In this section, we conducted a simulation study to examine the finite sample performances of the proposed PT-POET method. 
We first generated the log-prices $\bX^{l}(t_{j})$ for $D+1$ days with frequency $1/m$ on each day as follows: for $l = 1,\dots,D+1$, $j = 0, \dots, m$, and $t_{j} = j/m$,
\begin{align*}
    &\bY^{l}(t_{j}) = \bX^{l}(t_{j}) + \be^{l}(t_{j}), \\
    &d\bX^{l}(t) = \bpsi^{l \top}d\bW(t) + \bsigma^{l \top}d\bW^{*}(t) + \bJ^{l}(t)d\bLambda^{l}(t),
\end{align*}
and we set the market microstructure noise as $\be^{l}(t_{j}) = (e^{l}_1(t_j), \dots, e^{l}_p(t_j))$, where $e^{l}_i(t_j)$ were from i.i.d. normal distribution with mean zero and standard deviation $0.01\sqrt{\Sigma_{ii}}$ for the $i$th asset, and  the initial values $\bX_0 = (0,\dots, 0)^{\top}$; $\bW(t)$ and $\bW^{*}(t)$ are $r$-dimensional and $p$-dimensional independent Brownian motions, respectively, $\bJ^{l}(t) = (J_{1}^{l}(t), \dots, J_{p}^{l}(t))^{\top}$ is the jump size vector, and $\bLambda^{l}(t) = (\Lambda^{l}_{1}(t),\dots,\Lambda^{l}_{p}(t))^{\top}$ is the Poission process with intensity $\bI(t) = (5,\dots,5)^{\top}$.
The jump size $J_{i}^{l}(t)$ was obtained from the independent Gaussian distribution with mean zero and standard deviation $0.05\sqrt{\int_{0}^1\gamma_{ii}(t)dt}$.
$\bpsi^{l}$ and $\bsigma^{l}$ are the Cholesky decompositions of $\bPsi_{l}$ and $\bSigma_{l}$, respectively, where the integrated volatility process components $\cS=(\bPsi_{l})_{l = 1,\dots, D+1}$ and $\cE = (\bSigma_{l})_{l= 1,\dots, D+1}$ were constructed as follows:
\begin{align*}
    \cY = \cS + \cE =  \cF \times_{1} \bQ \times_{2} \bQ \times_{3} \bV + \cE,
\end{align*}
where the $r_1\times r_1 \times r_2$ latent tensor factor $\cF$ and the $p \times r_1$ loading matrix $\bQ$ were generated from the first $r_1$ leading eigenvalues and eigenvectors of $AA'$, respectively, where each element of $A$ was taken from an i.i.d standard normal distribution; $\bV=(v_{1},\dots, v_{D+1})'$ was generated by $v_{l} = b_0 + b_{1}v_{l-1} + b_{2}\frac{1}{5}\sum_{s=1}^{5}v_{l-s} + b_{3}\frac{1}{21}\sum_{s=1}^{21}v_{l-s}  + \zeta_{l}$, where  $\zeta_{l} \sim \mathcal{N}(0,1)$.
The model parameters were set to be $b_{0} = 0.5, b_{1} =0.372, b_{2} = 0.343, b_{3} = 0.224$. 
We set ranks $r_1 = 3$ and $r_2 = 1$, $p=200$.

We obtained the sparse volatility component $\cE = (\bSigma_{l})_{l= 1,\dots, D+1}$ as follows:
let $\bd = \diag(d_{1}^{2},\dots,d_{p}^{2})$, where each $\{d_{i}\}$ was generated independently from Gamma $(\alpha, \beta)$ with $\alpha = \beta = 100$. 
We set $s = (s_{1},\dots, s_{p})'$ to be a sparse vector, where each $s_{i}$ was drawn from $\mathcal{N}(0,1)$ with probability $\frac{0.3}{\sqrt{p}\log{p}}$, and $s_{i} = 0$ otherwise. 
Then, we set a sparse error covariance matrix as $\bSigma = \bd + ss' - \diag\{s_{1}^{2},\dots,s_{p}^{2}\}$, and we let $\bSigma_{l} = \bSigma$ for each $l$.
In the simulation, we generated $\bSigma$ until it is positive definite.
We first generated high-frequency data with $m = \{250, 500, 2000\}$ for 200 consecutive days and used the subsampled log prices of the last $D$ days. 
We varied $D$ from 50 to 200, and the whole simulation procedure was repeated 500 times.

To estimate the integrated volatility matrices, we employed the pre-averaging realized volatility matrix (PRVM) estimator  $\hat{\bGamma}_{l} = (\hat{\Gamma}_{l,ij})_{1\leq i,j\leq p}$ 
\citep{ait2016increased,christensen2010pre, jacod2009microstructure} for each $l$-th day as follows: 
\begin{equation}\label{PRVM}
    \hat{\Gamma}_{l,ij} =  \frac{1}{\phi K} \sum_{k=1}^{m-K+1} \left\{ \Bar{Y}^{l}_i(t_{k}) \Bar{Y}^{l}_j(t_{k}) - \frac{1}{2} \hat{Y}^{l}_{i,j}(t_{k}) \right\} \mathbf{1} \left\{ \left| \Bar{Y}^{l}_i(t_{k}) \right| \leq u_{i,m} \right\} \mathbf{1} \left\{ \left| \widetilde{Y}^{l}_j(t_{k}) \right| \leq u_{j,m} \right\},
\end{equation}
where
\begin{align*}
&\Bar{Y}^{l}_i(t_{k}) = \sum_{s=1}^{K-1} g \left( \frac{s}{K} \right) (Y^{l}_i(t_{k+s}) - Y^{l}_i(t_{k+s-1})),\\
&\hat{Y}^{l}_{i,j}(t_{k}) = \sum_{s=1}^{K}\Bigg[\left\{ g\left( \frac{s}{K} \right) - g \left( \frac{s-1}{K} \right) \right\}^2 \\
& \qquad\qquad\qquad\qquad \times (Y^{l}_i(t_{k+s-1}) - Y^{l}_i(t_{k+s-2})) (Y^{l}_j(t_{k+s-1}) - Y^{l}_j(t_{k+s-2}))\Bigg], 
\end{align*}
where
$\phi = \int_0^1 g(t)^2 \, dt$, $\mathbf{1}\{\}$ is an indicator function, and $u_{i,m} = c_{i,u} m^{0.235}$ is a truncation parameter for some constant $c_{i,u}$.
We chose the bandwidth parameter $K = \lfloor m^{1/2} \rfloor$, weight function $g(x) = x \wedge (1-x)$, and $c_{i,u}$ as 7 times the sample standard deviation for the pre-averaged variables $m^{\frac{1}{4}}\Bar{Y}_{i}(t_{d,k})$.

With the aggregated volatility matrix estimates spanning $D$ days, $\hat{\cY} = (\hat{\bGamma}_{l})_{l=1,\dots, D}$, we examined the out-of-sample performance of predicting the one-day-ahead aggregated volatility matrix.
For comparison, the PRVM, POET, FIVAR, T-POET, and PT-POET methods were employed to predict $E(\bGamma_{D+1}|\cI_{D})$, given the past $D$ period observation.
In particular, for PT-POET, we utilized the past daily, weekly, and monthly averages of the top eigenvalues for $\bX$ with the additive polynomial basis and $J = 2$.
PRVM and POET represent the PRVM estimator (i.e., $\hat{\bSigma}_{D}$) and the POET estimator \citep{fan2013large} based on PRVM at the $D$th day, respectively.
We also employed the FIVAR method  \citep{shin2021factor} to model the factor part dynamics based on the PRVM estimator.
Specifically, we used the past $D$ days' observations to estimate the model parameters and the previous 21 days to estimate the time-invariant eigenvectors.
We do not model the idiosyncratic dynamics to compare the performance of the factor modeling. 
Details can be found in \cite{shin2021factor}.
T-POET represents the $D$th day matrix estimator based on the conventional estimation procedure for the tensor observation, $\hat{\cY}$, without using additional covariates.
The integrated volatility matrix has a low-rank plus sparse structure.
Hence, to estimate the idiosyncratic component of the POET, FIVAR, T-POET, and PT-POET estimators, we employed a soft thresholding scheme and used the thresholding level $\sqrt{2 \log p/m^{1/2}}$ as in \citet{kim2019factor}.
Their idiosyncratic volatility matrix estimators are the same.

Figure \ref{sim_plot} presents the average log Frobenius, max, Spectral, and relative Frobenius norm errors of the future volatility matrix estimators with $D = 50, 100, 150, 200$ and $m = 250, 500, 2000$.
We note that for each simulation, the target future volatility matrix is the same for each different pair of $D$ and $m$.
Figure \ref{sim_plot} shows that the PT-POET method performs best.
This is because PT-POET can accurately predict the future integrated volatility matrix by leveraging the HAR-based interday volatility dynamics and time-varying eigenvector in addition to eigenvalue.
In addition, the matrix errors of PT-POET tend to decrease as $D$ and $m$ increase.
This finding supports the theoretical results in Section \ref{asymp}.

\begin{figure}[!ht]
	\includegraphics[width=\linewidth]{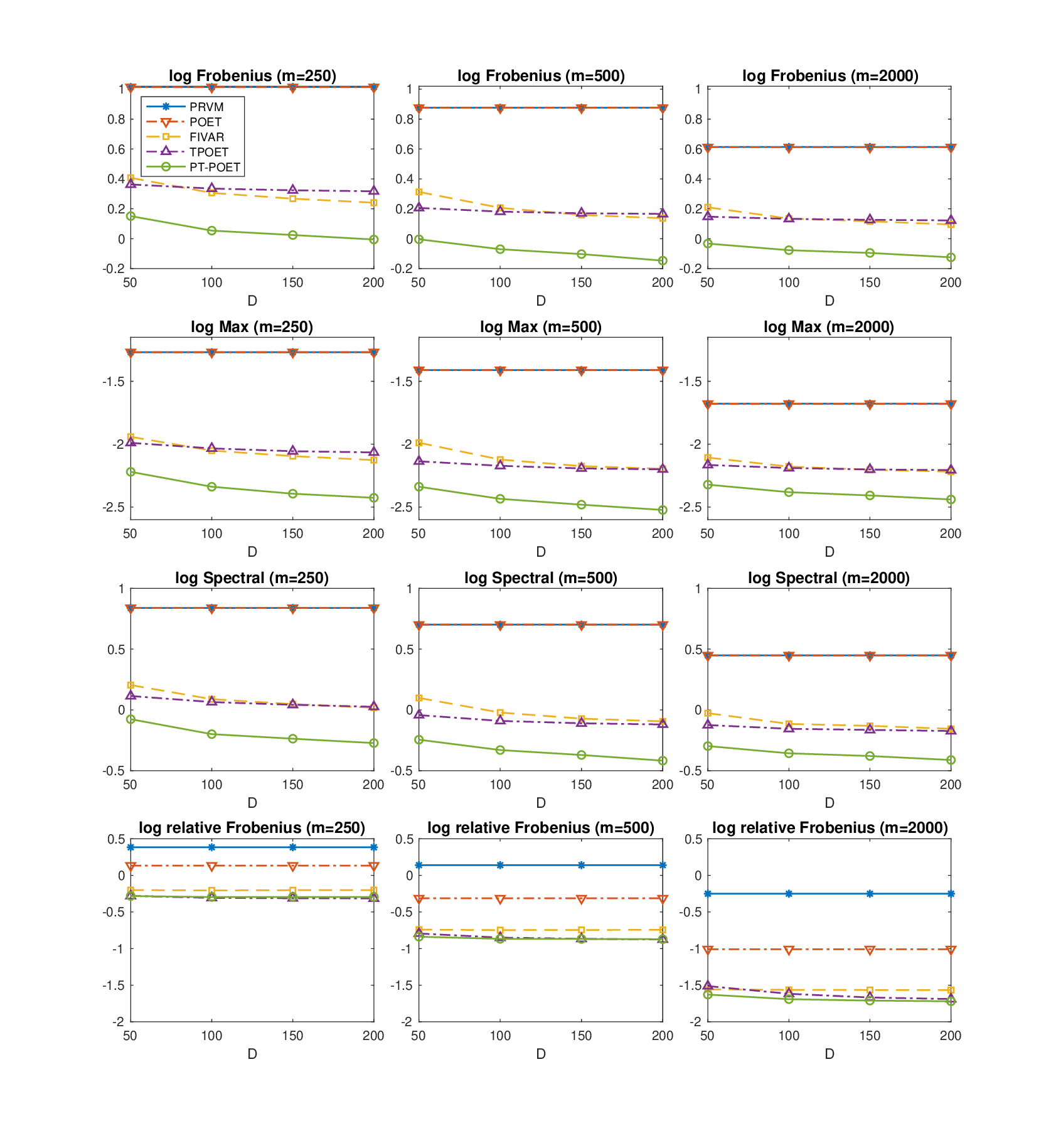}
	\centering	
	\caption{The log Frobenius, Max, Spectral, and relative Frobenius norm error plots of the PRVM, POET, FIVAR, T-POET, PT-POET estimators for the conditional expected integrated volatility matrix estimation against $D=50,100,150,200$, given $m = 250, 500, 2000$.}				\label{sim_plot}
\end{figure}

\section{Empirical Study} \label{empiric}
We applied the proposed PT-POET method to large volatility matrix prediction using real high-frequency trading data for 200 assets from January 2018 to December 2019 (503 trading days).
We selected the top 200 large trading volume stocks in the S\&P 500 in the Wharton Research Data Services (WRDS) system.
We used the previous tick scheme \citep{andersen2003modeling, barndorff2011multivariate, zhang2011estimating} to synchronize the high-frequency data to avoid the irregular observation time error issue, and we chose 1-min log-returns. 

We need to choose the ranks $ r_1$ and $ r_2$ to employ the proposed estimation procedure and other comparison methods. 
We first calculated 503 daily integrated volatility matrices using the PRVM estimation method in \eqref{PRVM}.
Then, we estimated the rank $r_1$ using the procedure suggested by \cite{ait2017using} as follows:
\begin{equation}
\hat{r}_{1} = \arg \min_{1 \leq j \leq r_{\max}} \sum_{d=1}^{503} \left[ p^{-1} \hat{\xi}_{d,j} + j \times c_1 \left\{ \sqrt{\frac{\log p}{m^{1/2}} + p^{-1} \log p} \right\}^{c_2} \right] - 1,
\end{equation}
where $\hat{\xi}_{d,j}$ is the $j$-th largest eigenvalue of PRVM estimator, $r_{\max} = 20$, $c_1 = 0.15 \times \hat{\xi}_{d,20}$, and $c_2 = 0.5$. The above method suggests $\hat{r}_{1} = 3$. 
In addition, Figure \ref{scree} represents the scree plot using the first 50 eigenvalues of the sum of 503 PRVM estimates.
Figure \ref{scree} confirms that this choice is reasonable. 
To estimate the rank $r_{2}$, we employed the largest singular value gap method as discussed in Section \ref{tuning}.
From those results, we set $r_{1}=3$ and $r_{2} =1$ for the empirical study.

To predict the conditional expected volatility matrix $E(\bGamma_{D+1} | \cI_{D})$, we employed the PT-POET, POET, FIVAR, and T-POET methods as described in Section \ref{simulation}.
Additionally, for robustness checks, we included PT-POET2, which is based on $r_{1}=3$ and $r_2 = 2$.
We also considered FIVAR-H, which incorporates a HAR structure instead of a VAR structure to predict one-day-ahead leading eigenvalues.
For PT-POETs,  we utilized the ex-post daily, weakly, and monthly realized volatility based on the first eigenvalues as covariates for $\bX$ and used the additive polynomial basis and $J =2$.
We used the rolling window scheme, where the in-sample period was one of 63, 126, or 252 days.
We presented the best-performing results for FIVAR, T-POET, and PT-POETs across the range of in-sample periods.
Specifically, the FIVAR method, utilizing observations from the past 252 days observations to estimate the model parameters and the previous 21 days to estimate the eigenvectors, performed best.
Following \cite{shin2021factor}, the AR lag $h=1$ was chosen based on the Bayesian information criterion (BIC).
For PT-POET and T-POET, an in-sample period of 63 days yielded the best performance.
We used three different out-of-sample periods: from 2019:1 to 2019:6 (period 1), from 2019:7 to 2019:12 (period 2), and from 2019:1 to 2019:12 (period 3).

\begin{figure}
	\includegraphics[width=0.8\textwidth]{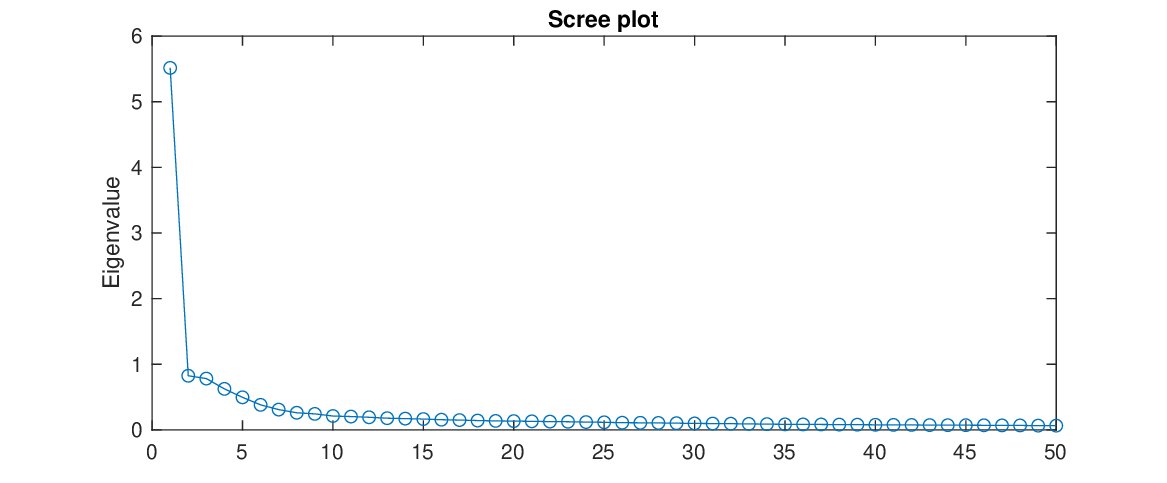}
	\centering	
	\caption{The scree plot of the first 50 eigenvalues of the sum of 503 PRVM estimates.}				\label{scree}
\end{figure}

For all estimators except PRVM, we estimated the idiosyncratic volatility matrix using the hard thresholding scheme based on the 11 Global Industrial Classification Standard (GICS) sectors \citep{ait2017using, fan2016incorporating}.
Specifically, we set the idiosyncratic components to zero across different sectors while retaining them within the same sector.

\begin{table}[h!]
    \centering
    \caption{MSPEs and QLIKEs for the PRVM, POET, FIVAR, FIVAR-H, T-POET, and PT-POETs.} \label{MSPE and QLIKE}
    \begin{tabular}{llllllll}
        \hline
        & PRVM & POET & FIVAR & FIVAR-H &  T-POET & PT-POET & PT-POET2 \\
        \midrule
        \multicolumn{8}{c}{MSPE$\times 10^4$} \\
        \cmidrule(lr){2-8}
        Period 1		&	1.300	&	1.269	&	0.942	& 0.971 &	0.951 &	0.878 & 0.880 \\
        Period 2		&	3.184	&	3.163	&	2.035	& 2.047 &	2.704 &	2.025 & 2.039 \\
        Period 3		&	2.242	&	2.216	&	1.488	& 1.509 & 	1.827 &	1.452 & 1.459 \\
         \midrule
        \multicolumn{8}{c}{QLIKE$\times 10^{-3}$} \\
        \cmidrule(lr){2-8}
        Period 1		&	--	&	0.535	&	-0.030	& -0.031 &	8.343 &	-0.050 & -0.044 \\
        Period 2		&	--	&	0.574	&	-0.024	& -0.024 &	10.184 &	-0.041 &	-0.041\\
        Period 3		&	--	&	0.554	&	-0.027	& -0.027 & 	9.264 &	-0.046 &	-0.043 \\	
        \bottomrule
    \end{tabular}

\end{table}

To measure the performance of the predicted volatility matrix, we first utilized the mean squared prediction error (MSPE) and QLIKE \citep{patton2011volatility}:
\begin{align*}
&\text{MSPE} = \frac{1}{T} \sum_{d=1}^{T} \|\widetilde{\bGamma}_d - \hat{\bGamma}_d^{\text{POET}}\|_F^2, \\
&\text{QLIKE} = \frac{1}{T} \sum_{d=1}^{T} \log \left( \det \left( \widetilde{\bGamma}_d \right) \right) + \text{tr} \left( \widetilde{\bGamma}_d^{-1} \hat{\bGamma}_d^{\text{POET}} \right),
\end{align*}
where $T$ is the number of days in the out-of-sample period, $\hat{\bGamma}_d^{\text{POET}}$ is the POET estimator for the $d$-th day, which is a proxy of true volatility matrix, and $\widetilde{\bGamma}_d$ is one of the one-day-ahead volatility matrix estimates from PRVM, POET, FIVAR, FIVAR-H, T-POET, PT-POET, and PT-POET2 for the $d$-th day of the out-of-sample period.
In addition, since the true conditional expected large volatility matrix is unknown, we conducted the Diebold and Mariano (DM) test \citep{diebold2002comparing} using MSPE and QLIKE to assess the significance of differences in predictive performance.
We compared the proposed PT-POET method with other methods.
Table  \ref{MSPE and QLIKE} reports the results of MSPEs and QLIKEs, and Table \ref{DM_tests} shows the $p$-values for the DM tests.
We note that the QLIKE results for the PRVM estimator are omitted, as its determinant is close to zero.
The results indicate that the PT-POET estimators demonstrate the best overall performance, and PT-POET and PT-POET2 show statistically similar performances. 
This may be because incorporating the tensor structure and projection method, along with additional covariates such as ex-post realized volatility information, enhances prediction accuracy.
We note that PT-POET does not statistically outperform FIVAR-H based on the DM test using MSPE. 
This outcome is due to the higher variance of prediction errors with FIVAR-H compared to FIVAR and PT-POET.
That is, FIVAR-H is relatively volatile.  
To further evaluate the proposed method's performance, we implemented the minimum variance portfolio allocation, as described below.

\begin{table}[h!]
    \centering
        \caption{The $p$-values for the DM test statistic based on MSPE and QLIKE for the PRVM, POET, FIVAR, FIVAR-H, T-POET, and PT-POET2 with respect to the PT-POET.}
        \label{DM_tests}
    \begin{threeparttable}
        
        \begin{tabular}{lllllll}
            \toprule
            & PRVM & POET & FIVAR & FIVAR-H & T-POET & PT-POET2 \\
            \midrule
            \multicolumn{7}{c}{MSPE} \\
            \cmidrule(lr){2-7}
            Period 1 & 0.000*** & 0.000*** & 0.035** & 0.242 & 0.012** & 0.845\\
            Period 2 & 0.004*** & 0.005*** & 0.902  & 0.810 & 0.044** &0.265\\
            Period 3 & 0.001*** & 0.001*** & 0.062* & 0.372 & 0.043** & 0.284 \\
            \midrule
            \multicolumn{7}{c}{QLIKE} \\
            \cmidrule(lr){2-7}
            Period 1 & -- & 0.000*** & 0.000*** & 0.000***& 0.000*** &0.124 \\
            Period 2 & -- & 0.000*** & 0.005*** & 0.005*** & 0.000*** & 0.102 \\
            Period 3 & -- & 0.000*** & 0.000*** & 0.000** & 0.000*** & 0.190 \\
            \bottomrule
        \end{tabular}
        
    \end{threeparttable}
    
    \begin{minipage}{\textwidth}
        \footnotesize
        \text{Note:} ***, **, and * indicate rejection of the null hypothesis at significance levels of 1\%, 5\%, and 10\%, respectively.
    \end{minipage}

\end{table}

To analyze the out-of-sample portfolio allocation performance, we also considered the following constrained minimum variance portfolio allocation problem \citep{fan2012vast}:
\begin{equation*}\label{min problem}
	\min_{\omega} \omega^{T}\widetilde{\bGamma}_{d}\omega, \text{ subject to } \omega^{\top}\mathbf{1} = 1, \; \|\omega\|_{1}\leq c,
\end{equation*}
where $\mathbf{1} = (1,\dots,1)^{\top} \in \mathbb{R}^{p}$, the gross exposure constraint $c$ varies from 1 to 3, and $\widetilde{\bGamma}_{d}$ is one of the one-day-ahead volatility matrix estimators obtained from PRVM, POET, FIVAR, FIVAR-H, T-POET, and PT-POETs.
At the beginning of each trading day, we obtained optimal portfolios based on each estimator and held these portfolios for one day.
We calculated the realized volatility using the 10-min portfolio log-returns to avoid the microstructural noise effect.
We then measured the out-of-sample risk by averaging the square root of the realized volatility for each out-of-sample period.
Figure \ref{plot2019} illustrates the out-of-sample risks of the portfolios constructed by the PRVM, POET, FIVAR, FIVAR-H, T-POET, and PT-POET estimators.
As shown in Figure \ref{plot2019}, the PT-POET and PT-POET2 estimators demonstrate stable performance and consistently outperform the other estimators.
Interestingly, while T-POET performs well in terms of global minimum risk, it becomes unstable as the gross exposure constraint increases. 
Additionally, PRVM and POET do not perform well.
This may be because they cannot capture the dynamics of the volatility process.
By considering the vector auto-regressive structure on eigenvalues of the volatility matrix, FIVAR shows improved performance compared to POET.
However, FIVAR underperforms PT-POET because it does not consider the time-varying eigenvector.
When comparing PT-POET and PT-POET2, PT-POET2 shows a more stable performance.
This may be because adding an additional dimension of time series helps account for the eigenvector dynamics.
Overall, these results indicate that the PT-POET method can efficiently predict the large integrated volatility matrix by incorporating interday time series dynamics based on both time-varying eigenvector and eigenvalue structures.

\begin{figure}
	\includegraphics[width=\linewidth]{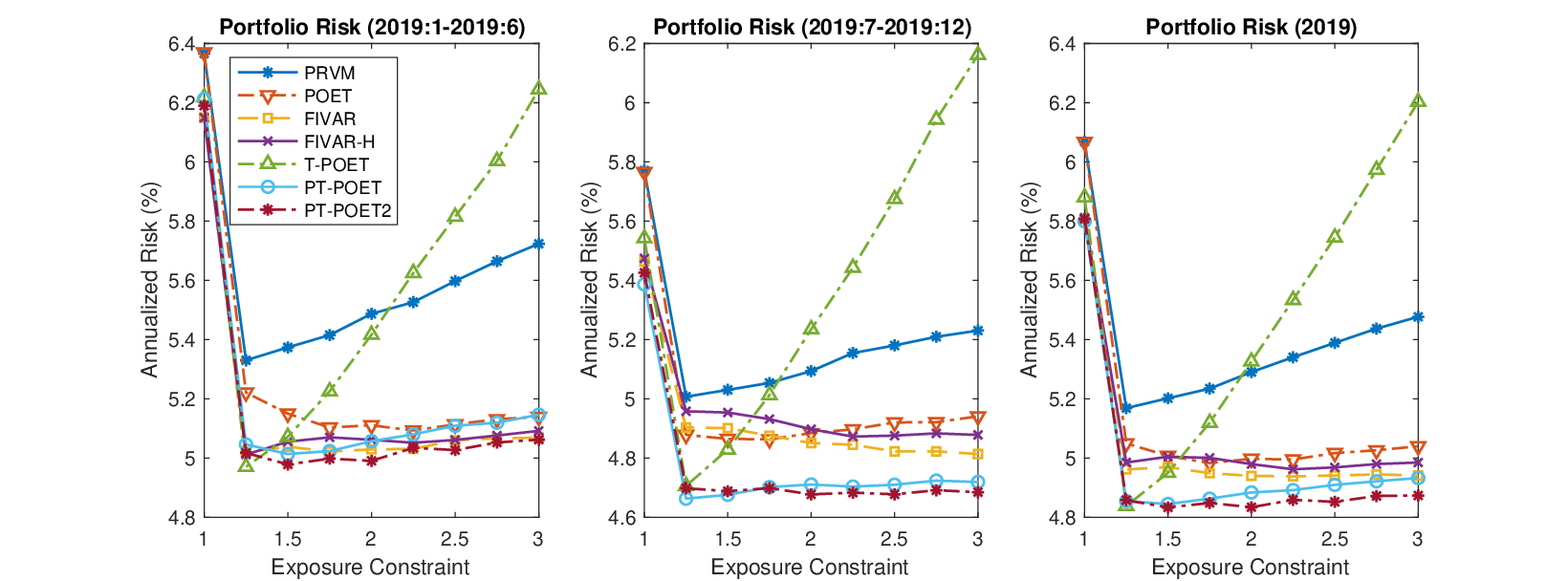}
	\centering	
	\caption{The out-of-sample risks of the minimum variance portfolios constructed by the PRVM, POET, FIVAR, FIVAR-H, T-POET, PT-POET estimators.}				\label{plot2019}
\end{figure}

\section{Conclusion} \label{conclusion}
This paper introduces a novel procedure for predicting large integrated volatility matrices using high-frequency financial data. 
The proposed  PT-POET method leverages daily volatility dynamics based on the semiparametric structure of the low-rank tensor component of the integrated volatility matrix process. 
We establish the asymptotic properties of PT-POET and its estimator for the future integrated volatility matrix.

In the empirical study, PT-POET outperforms conventional methods in terms of out-of-sample performance for predicting the one-day-ahead integrated volatility matrix and portfolio allocation. 
This finding confirms that generalizing the low-rank structure and incorporating the HAR model structure into interday volatility dynamics improves the prediction of future volatility matrices.
We note that, in this paper, we developed a generalized model for large volatility matrix processes and utilized the past leading eigenvalues as covariates for projecting the dynamics of singular vectors.
However, exploring other possible covariates, such as trading volume, with alternative basis functions would be an interesting direction for future research.
This study requires extensive empirical research.
Thus, we leave this for future research.


\bibliography{PTPOET}

	\end{document}